\def\endfigure{\end@float}
\def\endtable{\end@float}
\pretocmd{\printbibliography}{\FloatBarrier\balance\sloppy}{}{}
\newtheorem{assumption}{Assumption}
\newtheorem{definition}{Definition}
\newtheorem{theorem}{Theorem}
\newtheorem{remark}{Remark}
\newtheorem{corollary}{Corollary}
\definecolor{bleudefrance}{rgb}{0.19, 0.55, 0.91}
\definecolor{ao(english)}{rgb}{0.0, 0.5, 0.0}
\newcommand{\addcite}[0]{\ifthenelse{\boolean{showcomments}}
{\textcolor{purple}{(add cite(s)) }}{}}%
\newcommand{\addcites}[0]{\ifthenelse{\boolean{showcomments}}
{\textcolor{purple}{(add cite(s)) }}{}}%
\newcommand{\addref}[0]{\ifthenelse{\boolean{showcomments}}
{\textcolor{purple}{(add ref) }}{}}%
\newcommand{\enrique}[1]{  \ifthenelse{\boolean{showcomments}}
{\todo[inline,color=bleudefrance]{Enrique: #1}}{}}
\newcommand{\jixian}[1]{  \ifthenelse{\boolean{showcomments}}
{\todo[inline,color=lightgray]{Jixian: #1}}{}}
\newcommand{\hl}[1]{\ifthenelse{\boolean{showcomments}}
{\textcolor{red}{#1}}{#1}}
\newcommand{\aem}[1]{
\ifthenelse{\boolean{showedits}}
{\added[id=EM]{#1}}
{\!#1\hspace{-4.75pt}}
}
\newcommand{\repem}[2]{
\ifthenelse{\boolean{showedits}}
{\replaced[id=EM]{#1}{#2}}
{\!#1\hspace{-4.75pt}}
}
\newcommand{\dem}[1]{
\ifthenelse{\boolean{showedits}}
{\deleted[id=EM]{#1}}
{}
}
\newcommand{\cl}{\mathrm{cl}}
\newcommand{\K}{\mathcal{K}}
\newcommand{\X}{\mathcal{X}}
\newcommand{\U}{\mathcal{U}}
\newcommand{\R}{\mathbb{R}}
\newcommand{\s}{\mathcal{S}}
\newcommand{\Z}{\mathcal{Z}}
\newcommand{\Oi}{\mathcal{O}_i}
\newcommand{\Of}{\mathcal{O}}
\title{\LARGE \bf 
Safety-Critical Control via Recurrent Tracking Functions}
\author{Jixian Liu and Enrique Mallada
\thanks{J. Liu and E. Mallada are with the Department of Electrical and Computer Engineering, Johns Hopkins University, MD 21218, U.S.A. 
        {\tt\small jliu376@jh.edu,mallada@jhu.edu}.}
\thanks{This work was supported by the NSF Global Centers program under Grant No.~2330450 and by the DOE Office of Science (ASCR) under Award No.~826565.}}
\begin{document}
\maketitle
\thispagestyle{empty}
\pagestyle{empty}

\begin{abstract}
This paper addresses the challenge of synthesizing safety-critical controllers for high-order nonlinear systems, where constructing valid Control Barrier Functions (CBFs) remains computationally intractable. Leveraging layered control, we design CBFs in reduced-order models (RoMs) while regulating full-order models' (FoMs) dynamics at the same time. Traditional Lyapunov tracking functions are required to decrease monotonically, and systematic synthesis methods for such functions exist only for fully-actuated systems. To overcome this limitation, we introduce Recurrent Tracking Functions (RTFs), which replace the monotonic decay requirement with a weaker finite-time recurrence condition. This relaxation permits transient deviations of tracking errors while ensuring safety. By integrating CBFs for RoMs with RTFs, we construct recurrent CBFs (RCBFs) whose zero-superlevel set is control $\tau$-recurrent, and guarantee safety for all initial states in such a set when RTFs are satisfied. We establish theoretical safety guarantees and validate the approach through a proof-of-concept numerical experiment, demonstrating RTFs' effectiveness and the safety of FoMs.
\end{abstract}

\section{Introduction}
Safety-Critical Control in autonomous systems requires controllers that can guarantee constraint satisfaction while achieving performance objectives. Control Barrier Functions (CBFs) provide a principled framework for certifying safety by rendering desired safe sets forward invariant through appropriate control actions. However, the practical deployment of CBF-based methods hinges on the ability to construct valid barrier functions for the system at hand---a challenge that grows increasingly difficult as system complexity and dimensionality increase~\cite{compton2024learning}.

Existing CBF synthesis methods typically leverage certain relative degree properties~\cite{nguyen2016exponential,wang2021learning}, learning-based approaches~\cite{robey2021learning,srinivasan2020synthesis}, sum-of-squares programs for polynomial dynamics~\cite{clark2021verification, dai2022convex}, or reachability-based constructions from Hamilton--Jacobi value functions~\cite{choi2021robust}. These approaches often rely on assumptions that are difficult to satisfy in practice or are hampered by the curse of dimensionality, which limits their applicability to complex systems. As a result, scalable CBF synthesis for systems with higher-order dynamics, such as aerial vehicles, legged robots, and complex power grids, remains an important open problem~\cite{cohen2024safety}.

A promising alternative is layered control: a reduced-order model (RoM) captures safety-critical states (e.g., collision-relevant kinematics or center-of-mass motion) while a low-level controller regulates the full-order model (FoM)~\cite{matni2024quantitative,molnar2021model}. Designing CBFs in the lower-dimensional RoM reduces computation, and safety transfers to the FoM when tracking errors between the true signals and the safe ones admit exponential convergence certificates~\cite {molnar2023safety}. However, beyond fully actuated systems, there is no systematic procedure for constructing such tracking functions to certify the Lyapunov condition for exponential convergence, which hinders practice. Robust RoM-based CBFs with predictive error margins~\cite{compton2024learning} mitigate RoM–FoM mismatch, but the conservatism term~$\delta$ discards useful system information.

In this paper, we address these issues by relaxing the Lyapunov condition to a recurrent one~\cite{sspm2023cdc,ssm2024allerton,sibai2026recurrence,liu2025RCBF}: instead of enforcing monotone decay of the tracking function, we only require the decrease condition to recurrently hold on time instances that are separated by at most a uniform bound $\tau$.

Concretely, we introduce the concept of \textbf{Recurrent Tracking Functions (RTFs)}, for which any norm is a valid choice, thereby providing a flexible RoM--FoM linkage for general nonlinear systems. Then augmenting a RoM CBF with an RTF yields a recurrent CBF $h_V$~\cite{liu2025RCBF}, whose zero-superlevel set $S_V$ is control $\tau$-recurrent: trajectories may temporarily leave $S_V$ for at most $\tau$ time units, but never enter the unsafe region. Our main contributions are as follows:
\begin{enumerate}
    \item \textbf{RTF framework:} We formulate RTFs as a relaxation of standard Lyapunov decrease conditions. For Lipschitz continuous systems with exponentially stable tracking error, any norm of the tracking error is a valid RTF~\cite{sspm2025tac}.
    \item \textbf{Safety analysis:} We prove that if the initial state lies in $S_V$, then any controller satisfying the RTF condition guarantees safety for all time.
    \item \textbf{Numerical validation:} We demonstrate the effectiveness of the proposed approach on the FoM through numerical experiments.
\end{enumerate}

\noindent\textbf{Organization.} Section~\ref{sec:prelims} reviews layered control, CBFs, and recurrence-based analysis. Section~\ref{sec:RTF} presents the proposed RTF-based framework and its associated safety analysis. Section~\ref{sec:simulations} presents numerical results. Section~\ref{sec:conclusion} concludes the paper and outlines future directions.

\section{Preliminaries and Problem Formulation}
\label{sec:prelims}
\subsection{Layered Control System}
We begin this paper by recalling the setting of a layered-control system, consisting of a \textbf{full-order model (FoM)} and a \textbf{reduced-order model (RoM)}, which interacts with the FoM through the projection of the full state onto the RoM~\cite{matni2024quantitative}. 

\begin{definition}[Full-Order Model]
The FoM represents the physical system, typically high-dimensional, nonlinear, and only partially known. Its dynamics are assumed to evolve as
\begin{align}
    \dot{x} = F(x,u),
    \label{eq:Control_System}
\end{align}
where $x \in \mathcal{X} \subseteq \R^{N}$ is the state and $u \in U \subseteq \R^M$ is the control input.
\end{definition}

Given an interval $\mathcal{I}\subseteq\R_{\geq0}$, 
let $\mathcal{U}^{\mathcal{I}} := \{u:\mathcal{I}\to U \mid u \text{ measurable}\}$. We use $u$ for both instantaneous inputs $u\in U$ and signals $u\in\mathcal{U}^{\mathcal{I}}$, disambiguated by context, and $\mathcal{U}:=\mathcal{U}^{\R_{\geq0}}$. Given initial condition $x_0 \in \X$ and $u \in \mathcal{U}^{(0,a]}$, let $\phi(t,x_0,u)$ be the trajectory of~\eqref{eq:Control_System} at time $t\in(0,a]$.

\begin{assumption}[Forward Completeness]\label{ass:forward}
The control system~\eqref{eq:Control_System} is forward complete: for every initial condition $x_0 \in \mathcal{X} \subseteq \R^N$ and every admissible input signal $u \in \mathcal{U}$, the solution $\phi(t,x_0,u)$ of~\eqref{eq:Control_System}
exists for all $t \ge 0$. 
\end{assumption}

\begin{assumption}[Uniform Local Lipschitz Continuity]\label{ass:lipschitz}
The vector field $F:\mathcal{X} \times U \to \R^N$ in~\eqref{eq:Control_System} is continuous and locally Lipschitz in $x$, uniformly w.r.t. $u$, i.e., for every compact $S \subseteq \mathcal{X}$, there exists $L_{F_{S}} \ge 0$ such that 
\begin{align*}
\|F(y,u)-F(x,u)\| \le L_{F_{S}}\|y-x\|, \forall x,y \in S, \forall u \in U.
\end{align*}
\end{assumption}

\begin{definition}[Reduced-order Model]\label{def:rom}
The RoM captures safety-relevant, lower-dimensional dynamics:
\begin{align}
\label{eq:RoM}
    \dot z = f(z,v) ,
\end{align}
with state $z \in \Z \subseteq \R^{n}$ $(n<N),$ and input $v \in \mathrm{V} \subseteq \R^{m}$ $(m<M)$, where $f$ is also locally Lipschitz continuous with constant $L_{R_{S}}$ for every compact set $S\subseteq \Z$.
\end{definition}

And a closed ball is defined as
\begin{definition}[Closed Ball]
For any $y \in \R^d$ and $r>0$, define the closed ball centered at $y$ with radius $r$ as
\begin{align}
\mathcal{B}_r(y) := \{ w \in \R^d \mid \|w-y\| \le r \}.
\end{align}
\end{definition}

\noindent
\textbf{RoM-FoM Interconnection.}
The RoM and FoM are related through a projection map $\Pi:\mathcal X\to\mathcal Z$ and a tracking interface $K:\mathcal X\times V\to U$. Let $k:\mathcal Z\to V$ denote the feedback law of the RoM. Given a full-order state $x\in\mathcal X$, the projected state $z=\Pi(x)$ is used to generate the reference input $v = k(z)$, which is then passed to the FoM controller $u = K(x,v)$. This induces the closed-loop FoM
\begin{align}
\label{eq:closed_loop_FoM}
    \dot{x} = F_{\cl}(x) := F\bigl(x,K(x,k(\Pi(x)))\bigr).
\end{align}
We denote by $x_{\cl}(t)$ the corresponding closed-loop state trajectory with initial condition $x_0$, i.e., $x_\cl(0)=x_0$.

\subsection{Problem Statement}
To formalize the safety objective, let $h:\mathcal Z\to\R$ be a smooth function that defines the safe set of the reduced-order model (RoM).

\begin{definition}[Safe State of the RoM]\label{def:safe-rom}
A state $z \in \Z \subseteq \R^n$ is \textbf{safe} if $z\in \s_{\mathrm{RoM}}$, where
\begin{align}
    \s_{\mathrm{RoM}} := \{ z \in \Z : h(z) \ge 0 \}
\end{align}
is the RoM's safe set with boundary $\partial \s_{\mathrm{RoM}} = h^{-1}(0)$.
\end{definition}

\begin{definition}[Safe State of the FoM]\label{def:safe-fom}
A state $x \in \mathcal{X} \subseteq \R^N$ is \textbf{safe} if $x \in \s_{\mathrm{FoM}}$, where
\begin{align}
    \s_{\mathrm{FoM}} := \{ x \in \mathcal{X} : h(\Pi(x)) \ge 0 \}
\end{align}
is the FoM's safe set with boundary $\partial \s_{\mathrm{FoM}} = \Pi^{-1}(h^{-1}(0))$.
\end{definition}

The objective is to guarantee the safety of the FoM by synthesizing a controller based on the safe reference generated by the RoM.

\noindent
\textbf{Problem.} Given the RoM dynamics~\eqref{eq:RoM}, the closed-loop FoM~\eqref{eq:closed_loop_FoM}, and the safe set $\mathcal{S}_{\mathrm{FoM}}$, design a RoM feedback law $v = k(\Pi(x))$ and a compact set $S \subseteq \mathcal{S}_{\mathrm{FoM}}$ such that $x \in S \Rightarrow x_{\cl}(t) \in \mathcal{S}_{\mathrm{FoM}}, \forall t \ge 0.$

\subsection{Control Barrier Functions for the RoM}
Control barrier functions (CBFs) provide a standard tool for certifying safety of the RoM~\eqref{eq:RoM}. In particular, if a smooth function $h$ satisfies a suitable differential inequality involving an extended class-$\K$ function, then its zero-superlevel set is forward invariant and therefore safe. We first recall the notion of an extended class-$\K$ function.

\begin{definition}[Extended Class-$\K$ Function]
A function $\kappa: \R \to \R$ is an extended class $\K$ function if it is continuous, strictly increasing, and satisfies $\kappa(0) = 0$.
\end{definition}

With this notion in place, we can formalize the CBF condition used throughout the paper.

\begin{definition}[Control Barrier Function~\cite{acenst2019ecc}]\label{def:CBF}
A continuously differentiable function $h:\mathcal Z\to\R$ is a CBF for~\eqref{eq:RoM} if there exists an extended class-$\K$ function $\kappa$ such that
\begin{align}\label{eq:CBF}
    \max_{v \in V} \Bigl( L_f h(z,v) + \kappa(h(z)) \Bigr) \ge 0,
\end{align}
for all $z \in \mathcal Z$, where $
L_f h(z,v) := \frac{\partial h}{\partial z}(z)^\top f(z,v)$ denotes the Lie derivative of $h$ along the RoM dynamics.
\end{definition}

\begin{theorem}[\cite{acenst2019ecc}]
As a direct consequence of Definition~\ref{def:CBF}, any Lipschitz-continuous controller $k(z)$ that satisfies
\begin{align}
k(z) \in \{v \in V \mid L_{f}h(z) + \kappa(h(z)) \ge 0\}
\end{align}
renders the set $h_{\ge 0}:=\{z \mid h(z) \ge 0\}$ forward invariant. In particular, $h_{\ge0}$ is control invariant.
\end{theorem}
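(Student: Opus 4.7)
The plan is to substitute the feedback $v=k(z)$ into~\eqref{eq:RoM} to obtain the closed-loop reduced system $\dot z = f(z,k(z))$, and then analyze the evolution of $h$ along its trajectories via a scalar comparison argument. Because $f$ is locally Lipschitz by Definition~\ref{def:rom} and $k$ is assumed Lipschitz, the closed-loop vector field is locally Lipschitz in $z$, so for every initial condition $z_0 \in h_{\ge 0}$ a unique trajectory $z(t)$ exists on a maximal forward interval. The goal reduces to proving $h(z(t))\ge 0$ for all $t$ in that interval, from which forward invariance of $h_{\ge 0}$ follows (and forward completeness within $h_{\ge 0}$ can then be verified separately if $h_{\ge 0}$ is compact or $f$ is bounded on it).

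Next, I would differentiate $h$ along trajectories, giving $\dot h(z(t)) = L_f h(z(t))$, and invoke the pointwise hypothesis on $k$ to obtain the scalar differential inequality
\begin{align*}
\dot\eta(t) \ \ge\ -\kappa(\eta(t)), \qquad \eta(t):=h(z(t)),\ \eta(0)\ge 0.
\end{align*}
The conclusion $\eta(t)\ge 0$ then follows from a comparison principle applied against the autonomous scalar ODE $\dot\mu = -\kappa(\mu)$: since $\kappa$ is continuous with $\kappa(0)=0$, the constant $\mu\equiv 0$ is a solution, and by the monotone dependence of ODE solutions on initial data, $\mu(0)\ge 0$ implies $\mu(t)\ge 0$ for all $t\ge 0$; the differential inequality then yields $\eta(t)\ge \mu(t)\ge 0$.

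The main obstacle is that $\kappa$ is only assumed continuous and strictly increasing (extended class-$\mathcal{K}$), not Lipschitz, so the textbook comparison lemma does not apply verbatim and uniqueness of the comparison ODE may fail at $\mu=0$. I would circumvent this with a direct contradiction argument that does not rely on uniqueness: suppose, for contradiction, that $\eta(t_1)<0$ for some $t_1>0$. By continuity of $\eta$ and $\eta(0)\ge 0$, there exist $t_0<t_1$ with $\eta(t_0)=0$ and $\eta(t)\le 0$ on $[t_0,t_1]$; since $\kappa$ is strictly increasing with $\kappa(0)=0$, we have $-\kappa(\eta(t))\ge 0$ on this interval, so $\dot\eta(t)\ge 0$ there and hence $\eta(t_1)\ge \eta(t_0)=0$, contradicting $\eta(t_1)<0$.

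Finally, for the \emph{control} invariance claim, I would note that the CBF inequality~\eqref{eq:CBF} guarantees that the set-valued map $z\mapsto\{v\in V: L_f h(z)+\kappa(h(z))\ge 0\}$ is nonempty for every $z\in\mathcal{Z}$, so by appealing to standard selection results (e.g., Michael's selection theorem, or the explicit min-norm/quadratic-program selection commonly used in the CBF literature) one may exhibit a Lipschitz $k$ meeting the hypothesis, which together with the invariance argument above establishes that $h_{\ge 0}$ is control invariant.
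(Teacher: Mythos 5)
The paper does not prove this theorem; it is stated as a known result imported from \cite{acenst2019ecc}, so there is no in-paper argument to compare against. Your proof is correct and self-contained. The closed-loop reduction, the scalar inequality $\dot\eta \ge -\kappa(\eta)$ with $\eta = h\circ z$, and the final contradiction argument all go through: taking $t_0$ to be the last zero of $\eta$ before a hypothetical $t_1$ with $\eta(t_1)<0$, the sign of $\kappa$ on $(-\infty,0]$ forces $\dot\eta\ge 0$ on $[t_0,t_1]$ and hence $\eta(t_1)\ge 0$, a contradiction. Notably, you correctly flag that the textbook comparison lemma is not directly applicable because an extended class-$\mathcal{K}$ function need not be locally Lipschitz, and your Nagumo-style contradiction sidesteps that uniqueness issue entirely; this is actually more careful than the citation-level treatment the theorem usually receives. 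Two minor caveats worth making explicit if you wrote this out in full: the conclusion holds on the maximal interval of existence of the closed-loop trajectory (forward completeness of the RoM is not assumed in the paper, so either add that hypothesis or restrict the claim accordingly), and for the final "control invariant" clause it suffices to exhibit \emph{one} admissible controller rendering the set invariant --- the nonemptiness of the admissible-input set guaranteed by \eqref{eq:CBF} plus your invariance argument already gives this, so the appeal to Lipschitz selection theorems is only needed if one insists the witnessing controller be Lipschitz.
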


\subsection{Recurrent Lyapunov and Control Barrier Functions}
Constructing a Lyapunov function for a general high-order system is often intractable. Reference~\cite{sspm2025tac} relaxes this requirement via Recurrent Lyapunov Functions (RLFs), which require the Lyapunov decrease condition to hold only recurrently rather than continuously. RLFs still guarantee exponential stability, and a converse theorem shows that any norm of the state satisfies the RLF conditions for exponentially stable systems. Since RLF conditions are defined for autonomous systems, we express the results in this section in terms of $x_{\cl}(t)$, which denotes solutions to~\eqref{eq:closed_loop_FoM}.

\begin{definition}[Exponential Stability]\label{def:exp_stab}
Given $S\subseteq\R^N$, an equilibrium point $x^*$ of~\eqref{eq:closed_loop_FoM} is \emph{exponentially stable} on $S$ if there exist constants $M,\lambda>0$ such that, for all $x_0\in S$,
\begin{equation}\label{eq:exp_stab}
    \|x_{\cl}(t)-x^*\| \le M e^{-\lambda t}\|x_0-x^*\|,\qquad \forall t\ge 0.
\end{equation}
\end{definition}

\begin{definition}[Reachable Tube]\label{def:reachable_tube}
For system~\eqref{eq:closed_loop_FoM}, $\tau>0$, and $S\subset \R^N$, the $\tau$-reachable tube of $S$ is
\begin{equation}
\mathcal{R}^{\tau}(S) = \bigcup_{\substack{x_0 \in S\\ t\in[0,\tau]}} \{x_{\cl}(t)\}.
\end{equation}
\end{definition}

\begin{definition}[Containment Times]\label{def:containment_times}
For $S\subseteq \R^N$, $x_0\in\X$, the containment time for~\eqref{eq:closed_loop_FoM} is defined as 
$
T_S(x_0) := \{t>0 \mid x_{\cl}(t)\in S\}.$
For constants $a,b$, $T_S(x_0;a,b) := T_S(x_0)\cap (a,a+b],$ and $ T_S(x_0; b):=T_S(x_0;0,b).$
\end{definition}

\begin{definition}[Recurrent Lyapunov Function~\cite{sspm2025tac}]\label{def:RCLF}
Given an equilibrium $x^*\in\mathcal{X}$ of~\eqref{eq:closed_loop_FoM} and a compact $S\subseteq\mathcal{X}$ with $x^*\in\mathrm{int}(S)$, a continuous $V:\mathcal{X}\to\R_{\ge0}$ is a Recurrent Lyapunov Function (RLF) over $S$ if:
\begin{enumerate}
\item (\textbf{Positive definiteness}) $\exists\,a_1,a_2>0$ such that
\begin{align}\label{eq:RLF_bound}
a_1 \|x - x^*\| \le V(x) \le a_2 \|x - x^*\|,\forall x \in S.
\end{align}
\item (\textbf{Exponential $\tau$-recurrence}) $\exists \alpha,\tau>0$ such that
\begin{align}\label{eq:RLF_decay}
\min_{t\in T_{S}(x_0;\tau)} e^{\alpha t} V(x_{\cl}(t)) \le V(x_0), \forall x_0 \in S.
\end{align}
\end{enumerate}
\end{definition}

Condition~\eqref{eq:RLF_decay} enforces exponential convergence via recurrent returns, relaxing monotone decrease.
Analogously, \cite{liu2025RCBF} defines a \textbf{Recurrent Control Barrier Function (RCBF)} $h^r:\R^N\to\R$, which replaces invariance with recurrence: trajectories may leave $h^r\ge0$ but must re-enter within time $\tau$, infinitely often; the resulting set is control $\tau$-recurrent.

\begin{remark}
The following requires extending the definition of containment times, i.e., Definition~\ref{def:containment_times}, for the FoM system~\eqref{eq:Control_System}.
For $S \subseteq \R^N, x_0 \in \R^{N}$, we use $T_{S}(x_0,u) := \{t> 0 \mid \phi(t, x_0, u) \in S\}$ denoted as the containment time for~\eqref{eq:Control_System} and $T_{S}(x_0, u; \tau) := T_{S}(x_0, u) \cap (0, \tau]$.
\end{remark}

\begin{definition}[Recurrent Control Barrier Function]\label{def:RCBF}
For~\eqref{eq:Control_System} and a compact $S\subseteq\R^N$, a continuous $h^r:\R^N\to\R$ is an RCBF over $S$ if for every $x\in S$ there exists $u \in \U^{(0,\tau]}$ such that:
\begin{align}\label{eq:RCBF}
\max_{t \in T_{S}(x,u;\tau)} e^{\gamma(h^r(\phi(t,x,u)))t} h^r(\phi(t,x,u)) \ge h^r(x),
\end{align}
where the function $\gamma: \R \to \R_{>0}$ is given.
\end{definition}

Unlike invariance-based safety, which requires the invariant set to be disjoint from unsafe regions, RCBFs certify safety if the recurrent set avoids the backward $\tau$-reachable tube of the unsafe set~\cite{bansal2017hamilton}. 

A key advantage of the recurrence framework is that both RLFs and RCBFs can be constructed using simple norm and signed distance functions, respectively~\cite{sspm2025tac,ssm2024allerton}, significantly simplifying synthesis compared to classical Lyapunov and barrier functions. 
This motivates replacing the exponential tracking certificates in~\cite{molnar2021model} with (possibly norm-based) Recurrent Tracking Certificates, yielding a systematic synthesis method aligned with~\eqref{eq:RLF_decay}--\eqref{eq:RCBF}.

\section{Layered Safety-Critical Control via Recurrence}\label{sec:RTF}
In this section, we extend the recurrence framework to provide safety guarantees for layered control architectures. A standard approach in this setting is to combine RoM safety with tracking-based interface assumptions to certify FoM safety~\cite{compton2024learning}.

\subsection{Ideal Tracking}
Firstly, we consider trajectories $x_{\cl}(t)$ of the closed loop system \eqref{eq:closed_loop_FoM}, and $z_{\cl}(t)$ is its projection, i.e., $z_\cl(t)=\Pi(x_\cl(t))$. Then consider a CBF $h: \R^n \rightarrow \R$ for the RoM~\eqref{eq:RoM} with linear extended class-$\K$ function $\kappa(h) = \alpha h$ for some $\alpha > 0$, and the safe velocity reference $\dot z_{s}(t)$ satisfies 
\begin{align}
\nabla h(z_{\cl}(\cdot))^\top \dot z_{s}(\cdot) \ge -\alpha h(z_{\cl}(\cdot)).
\end{align}
Moreover, along the projected closed-loop trajectory,
\begin{align}
\dot{h}(z_{\cl}(\cdot)) & = \nabla h(z_{\cl}(\cdot))^{\top}\dot{z_{\cl}}(\cdot) \notag\\
& = \nabla h(z_{\cl}(\cdot))^{\top}(\dot{z}_s(\cdot) + \dot{e}(\cdot)),
\end{align}
where the tracking error between the true velocity $\dot z_{\cl}$ and $\dot z_s$ is $\dot{e}(\cdot) := \dot{z}_{\cl}(\cdot) - \dot{z}_s(\cdot)$,
with initial condition $\dot{e}_0 := \dot{e}(0)$.

 We impose the following regularity condition on the RoM safety function to bound its sensitivity with respect to the projected FoM state.

\begin{assumption}[Boundedness]\label{ass:boundness}
There exists $C_h > 0$ s.t.
\begin{align*}
\|\nabla h(\Pi(x))\| \le C_h, \quad \forall x \in \mathcal{S}_{\mathrm{FoM}}.
\end{align*}
\end{assumption}

To capture the ideal tracking regime, we impose the following exponential tracking condition on the velocity error.

\begin{assumption}[Exponentially Stable Tracking]
\label{ass:Exponential_Tracking}
The velocity tracking error is assumed to decay exponentially, i.e., there exist positive constants M and $\beta$ such that
\begin{align}
\|\dot e(t)\| \le M \|\dot e_0\| e^{-\beta t}
\end{align}
for all $t \ge 0$.
\end{assumption}

 As shown later, this assumption provides a sufficient condition for constructing a valid recurrent tracking certificates.

\subsection{Recurrent Tracking Function}

We then formalize the notion of \textbf{Recurrent Tracking Functions (RTFs)} for system~\eqref{eq:closed_loop_FoM} by defining a condition akin to \eqref{eq:RLF_decay} for tracking errors. For a set $S \subseteq \mathcal{Z} \times \R^n$ and initial condition $x_0 \in \X$, where $(z_0, e_0) \in S$, define $T_S^{e}(x_0) := \{ t>0 \mid (z_{\cl}(t), \dot e(t)) \in S \},$
and $T_S^{e}(x_0;\tau) := T_S^{e}(x_0)\cap (0,\tau].$

\begin{definition}[Recurrent Tracking Function]\label{def:RTF}
Consider the system~\eqref{eq:closed_loop_FoM} and a compact set $S \subseteq \mathcal{Z} \times \R^n$ with $0 \in \mathrm{int}(S_{\dot e})$, where $S_{\dot e} := \{ \dot e \in \R^n : \exists z \in \R^n,\ (z,\dot e)\in S \}$. A continuous function $V:\mathcal{Z} \times \R^n \to \R_{\ge 0}$ is a \textbf{Recurrent Tracking Function (RTF)} over $S$ if:
\begin{enumerate}
    \item \textbf{Positive definiteness with linear error bounds:} There exist $a_1,a_2>0$ such that \begin{align}\label{eq:linear_contained}
        a_1 \|\dot e\| \le V(z, \dot e) \le a_2 \|\dot e\|, \forall (z,\dot e)\in S .
    \end{align}
    \item \textbf{$\beta$-exponential $\tau$-recurrence:} There exist $\tau,\beta>0$ such that for every $x\in\mathcal{X}$ s.t. $(z_0, \dot e_0) \in S$,  the corresponding tracking error $(z_{\cl}(\cdot),\dot e(\cdot))$
    satisfies
    \begin{align}\label{eq:exp_RTF}
        \min_{t \in T_{S}^e(x_0;\tau)} e^{\beta t} V(z_{\cl}(t), \dot e(t)) \le V(z_0, \dot e_0).
    \end{align}
\end{enumerate}
\end{definition}

In contrast to classical Lyapunov tracking functions, which must decrease monotonically along trajectories, RTF only require exponential $\tau$-recurrence. The exponential decay of the tracking error implies that $\|\dot e\|$ serve as a valid RTF. This avoids the need to explicitly construct a monotone tracking Lyapunov function, a task that is generally difficult beyond simple systems.

\begin{theorem}[Norm is Valid RTF]\label{theo:exponential_stability}
Consider system~\eqref{eq:closed_loop_FoM}, suppose Assumption~\ref{ass:Exponential_Tracking} holds, given a compact set $S \subseteq \mathcal{Z} \times \R^n$ with  $0 \in \mathrm{int}(S_{\dot e})$. then for any $0< \beta' <\beta$ and $\tau \ge \frac{1}{\beta - \beta'}\ln(M\frac{b_2}{b_1})$, where $\mathcal B_{b_1}(0) \subseteq S_{\dot e} \subseteq \mathcal B_{b_2}(0)$, the function $V(z,\dot e) := \|\dot e\|$ is a valid RTF where
\begin{align}
\min_{t \in T_{S}^e(x_0;\tau)} e^{\beta' t} V(z_{\cl}(t), \dot e(t)) \le V(z_0, \dot e_0).
\end{align}
\end{theorem}

\begin{proof}
The proof follows closely similar results for \cite[Theorem 6]{sspm2025tac}, and it is omitted due to space constraints.
\end{proof}

\subsection{Safety Assessment}
In this subsection, we combine a valid RTF for the closed-loop FoM~\eqref{eq:closed_loop_FoM} with a CBF for the RoM~\eqref{eq:RoM} to characterize the set of initial conditions from which safety of the closed-loop system can be guaranteed.

\begin{theorem}[Recurrent CBF Construction]\label{theo:RCBF_set}
Consider system~\eqref{eq:closed_loop_FoM} and suppose $h: \R^n \to \R$ is a CBF for the RoM~\eqref{eq:RoM} with extended class-$\mathcal{K}$ function $\kappa(h) = \alpha h$ for some $\alpha > 0$. If $V(z, \dot{e})$ is a valid RTF given recurrent time $\tau$ over $S \subseteq \mathcal{Z} \times \R^{n}$ with convergence rate $\beta > \alpha$, then 
\begin{align}
\label{eq:h_V}
    h_{V}(z,\dot{e}) = -V(z, \dot{e}) + \alpha_e h(z)
\end{align}
is an RCBF, where $\alpha_e = \frac{a_1^2(\beta - \alpha)}{a_2C_{h} M}$, and its zero-superlevel set
\begin{align}
    S_V := \{(z,\dot{e}) \in S : h_V(z,\dot{e}) \geq 0\}
\end{align}
is a control $\tau$-recurrent set. 

In particular, for any $x_0 \in \s_{\mathrm{FoM}}$ s.t. $(\Pi(x_0),\dot e_0)\in S_V$, the induced control signal $u(\cdot)=K(x(\cdot),k(\Pi(x(\cdot))))$ satisfies the RCBF condition.
\end{theorem}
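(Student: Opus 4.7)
The plan is to establish control $\tau$-recurrence of $S_V$ by showing that, under the closed-loop dynamics~\eqref{eq:closed_loop_FoM} whose tracking error satisfies the RTF condition, every trajectory with $(z(0),\dot e(0))\in S_V$ re-enters $S_V$ within time $\tau$; the RCBF property of $h_V$ then follows from this recurrence at the zero level. The argument couples two ingredients: a differential comparison on $h(z(t))$ coming from the RoM CBF inequality, and the RTF-induced decay of $\dot e(t)$, which are reconciled through the precise choice of $\alpha_e$.

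First, I would propagate a lower bound on $h(z(t))$ along the trajectory. By Assumption~\ref{ass:relative_degree}, $\dot h(z(t)) = \nabla h(z(t))^{\top}(\dot z_s(t)+\dot e(t))$. Combining $\nabla h(z)^{\top}\dot z_s \geq -\alpha h(z)$ with Cauchy--Schwarz and Assumption~\ref{ass:boundness} gives the differential inequality $\dot h(z(t)) \geq -\alpha h(z(t)) - C_h \|\dot e(t)\|$. Variation of constants (or the comparison lemma) then yields $h(z(t)) \geq e^{-\alpha t} h(z(0)) - C_h \int_0^t e^{-\alpha(t-s)} \|\dot e(s)\|\, ds$. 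Using Theorem~\ref{theo:exponential_stability} to bound $\|\dot e(s)\| \leq M e^{-\beta s} \|\dot e(0)\|$, and the left half of~\eqref{eq:linear_contained} to bound $\|\dot e(0)\| \leq V(z(0),\dot e(0))/a_1$, the remaining integral is elementary and produces $h(z(t)) \geq e^{-\alpha t} h(z(0)) - \frac{C_h M}{a_1(\beta-\alpha)}(e^{-\alpha t}-e^{-\beta t})\,V(z(0),\dot e(0))$.

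Second, I would invoke the RTF recurrence~\eqref{eq:exp_RTF} to select a time $t^{*}\in T_S(x;\tau)$ with $V(z(t^{*}),\dot e(t^{*})) \leq e^{-\beta t^{*}} V(z(0),\dot e(0))$, and insert both bounds into $h_V(z(t^{*}),\dot e(t^{*})) = \alpha_e h(z(t^{*})) - V(z(t^{*}),\dot e(t^{*}))$. The specific choice $\alpha_e = a_1^{2}(\beta-\alpha)/(a_2 C_h M)$ is engineered so that $\alpha_e \cdot C_h M/[a_1(\beta-\alpha)] = a_1/a_2$, and combining with the hypothesis $V(z(0),\dot e(0)) \leq \alpha_e h(z(0))$ (equivalent to $h_V(z(0),\dot e(0))\geq 0$), the $e^{-\alpha t^{*}}$ and $e^{-\beta t^{*}}$ terms reorganise into $h_V(z(t^{*}),\dot e(t^{*})) \geq \alpha_e h(z(0))\,(1 - a_1/a_2)\,(e^{-\alpha t^{*}} - e^{-\beta t^{*}})$. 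Since $a_1 \leq a_2$, $\beta>\alpha$, and $h(z(0))\geq 0$, the right-hand side is non-negative, so $(z(t^{*}),\dot e(t^{*}))\in S_V$. This establishes control $\tau$-recurrence of $S_V$; the RCBF inequality~\eqref{eq:RCBF} at the zero level is then satisfied with a $\gamma$ calibrated so that $e^{\gamma(h_V(\phi(t^{*})))}h_V(\phi(t^{*})) \geq h_V(x)$, which is feasible whenever $h_V(\phi(t^{*}))>0$ (and trivial when $h_V(x)=0$).

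The main obstacle, in my view, is the precise algebraic calibration in the second step: $\alpha_e$ must be set so that the $e^{-\alpha t^{*}}$-weighted contribution of $h(z(0))$ and the $e^{-\beta t^{*}}$-weighted contribution of $V(z(0),\dot e(0))$ recombine into the non-negative envelope above, without leaving a residual that would impose auxiliary constraints on $t^{*}$. A secondary point of care is that Theorem~\ref{theo:exponential_stability} supplies the pointwise bound on $\|\dot e(s)\|$ only while the trajectory stays inside a compact superset of $S$; this is absorbed by the reachable-tube construction already encoded in the constant $M$, which justifies using the bound uniformly on $[0,\tau]$ and, in particular, at the selected $t^{*}$.
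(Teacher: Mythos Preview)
Your approach is correct and in fact cleaner than the paper's, but you stop one step short of the full RCBF inequality and paper over it with a hand-wave about ``calibrating $\gamma$''.

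\textbf{Comparison.} The paper manipulates $h_V$ as a single object: it writes $\max_{t\in T} e^{\alpha t}h_V(z(t),\dot e(t))-h_V(z,\dot e)$ as $\int_0^t \tfrac{d}{ds}\bigl(e^{\alpha s}h_V\bigr)\,ds$, expands, applies the CBF/Cauchy--Schwarz bounds, and then uses a somewhat delicate integration-by-parts step to peel off $e^{\alpha s}V(z(s),\dot e(s))$ so that the exponential tracking bound can be inserted. You instead decouple the two halves of $h_V$: bound $h(z(t))$ via the comparison lemma (exactly as the paper does---but only later, in its proof of Theorem~\ref{theo:safe_assess}), bound $V$ at the recurrence time $t^{*}$ via~\eqref{eq:exp_RTF}, and recombine. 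This avoids the integration-by-parts gymnastics entirely and is more transparent about why the constant $\alpha_e$ has the form it does.

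\textbf{The one gap.} From $\tau$-recurrence of $S_V$ alone you cannot manufacture a single function $\gamma:\R\to\R_{>0}$ satisfying~\eqref{eq:RCBF}: your ``calibration'' would have to depend on the particular $x$ and $t^{*}$, and fails outright if $h_V(\phi(t^{*}))=0$ while $h_V(x)>0$. The fix is already in your own computation. Do \emph{not} invoke $V(z(0),\dot e(0))\le \alpha_e h(z(0))$; instead keep the $\alpha_e h(z(0))$ term intact and multiply through by $e^{\alpha t^{*}}$. Your two bounds then give
\[
e^{\alpha t^{*}}h_V\bigl(z(t^{*}),\dot e(t^{*})\bigr)\;\ge\; h_V(z,\dot e)\;+\;V(z,\dot e)\,\Bigl(1-\tfrac{a_1}{a_2}\Bigr)\bigl(1-e^{-(\beta-\alpha)t^{*}}\bigr)\;\ge\;h_V(z,\dot e),
\]
which is precisely the weighted RCBF inequality the paper establishes (with the same abuse of writing $\gamma(h_V)=\alpha t$), and it holds for every initial point in $S_V$, not just on $\partial S_V$.
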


\begin{proof} 
See Appendix~\ref{app:RCBF_set}
\end{proof}

\begin{remark}[Conservativeness] Theorem~\ref{theo:RCBF_set} requires that $(z_0,\dot e_0)\in S_V$, equivalently
$h_V(z_0,\dot e_0)=-V(z_0,\dot e_0)+\alpha_e h(z_0)\ge 0,$
which implies
$h(z_0)\ge \frac{V(z_0,\dot e_0)}{\alpha_e}.$
Since the safe set $\s_{\mathrm{FoM}}$ is defined only by $h(z_0)\ge 0$, this condition is more conservative than merely requiring $z_0\in \s_{\mathrm{FoM}}$. We highlight, however, that a priori for underactuated systems it is not clear that the entire set $\mathcal{S}_\mathrm{FoM}$ is safe.
\end{remark}

\begin{remark}[Feasibility]
For systems with limited control authority, constructing a classical exponential tracking function may be difficult because monotone error decrease need not hold under admissible inputs. By contrast, our framework only requires a valid RTF and a nonempty induced set $S_V$. In particular, if the RoM safe set has nonempty interior and $V(z,0)=0$, then any $z$ satisfying $h(z)>0$ also satisfies $(z,0)\in S_V$, and thus $S_V\neq\emptyset$.
\end{remark}

\begin{theorem}[Safety Assessment]
\label{theo:safe_assess}
Let $V(z, \dot e)$ be the RTF over $S \subseteq \mathcal{Z} \times \R^n$ with rate $\beta$ and consider an initial state $x_0\in\s_{\mathrm{FoM}}$ of the FoM s.t. $(z_0, \dot e_0) \in S_V = \{(z, \dot e) \in \mathcal{Z} \times \R^{n}: h_{V}(z, \dot e) \geq 0\}$. Then, the solution $x_{\cl}(t)$ of the closed system always remain in safe state set $\s_{\mathrm{FoM}}$, i.e. $x_{\cl}(t)\in\s_{\mathrm{FoM}}$ $\forall t\geq0$. 
\end{theorem}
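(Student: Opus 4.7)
The plan is to reduce safety---i.e.\ $h(z(t))\ge 0$ for all $t\ge 0$---to a scalar comparison-lemma bound on $h(z(t))$. By Theorem~\ref{theo:RCBF_set}, the closed-loop controller $K$ satisfies the RTF condition on $S_V\subseteq S$, so Theorem~\ref{theo:exponential_stability} supplies the envelope $\|\dot e(t)\|\le M e^{-\beta t}\|\dot e(0)\|$. I would combine this envelope with the RoM's CBF inequality $\nabla h(z)^{\top}\dot z_s\ge -\alpha h(z)$ and Assumption~\ref{ass:boundness} to obtain a linear differential inequality for $h(z(t))$, then close the loop by noting that the hypothesis $(z,\dot e)\in S_V$ provides precisely enough margin on $h(z(0))$ to keep the resulting bound nonnegative.

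Concretely, Assumption~\ref{ass:relative_degree} gives $\dot h = \nabla h(z)^{\top}(\dot z_s+\dot e)$; the CBF inequality together with Cauchy--Schwarz and $\|\nabla h\|\le C_h$ yields
\[
\dot h(z(t)) \ge -\alpha h(z(t)) - C_h\|\dot e(t)\| \ge -\alpha h(z(t)) - C_h M e^{-\beta t}\|\dot e(0)\|.
\]
Applying the comparison lemma against $\dot w = -\alpha w - C_h M e^{-\beta t}\|\dot e(0)\|$, $w(0)=h(z(0))$, and integrating explicitly (using $\beta>\alpha$), I would obtain
\[
h(z(t)) \ge e^{-\alpha t}\,h(z(0)) - \frac{C_h M\|\dot e(0)\|}{\beta-\alpha}\bigl(e^{-\alpha t}-e^{-\beta t}\bigr),
\]
which is nonnegative for every $t\ge 0$ whenever $h(z(0))\ge \frac{C_h M\|\dot e(0)\|}{\beta-\alpha}$. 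The hypothesis $h_V(z(0),\dot e(0))\ge 0$ is equivalent to $V(z(0),\dot e(0))\le \alpha_e h(z(0))$; combining the lower bound $a_1\|\dot e\|\le V$ from~\eqref{eq:linear_contained} with the explicit $\alpha_e = a_1^2(\beta-\alpha)/(a_2 C_h M)$ from~\eqref{eq:h_V} gives $h(z(0))\ge \frac{a_2}{a_1}\cdot\frac{C_h M\|\dot e(0)\|}{\beta-\alpha}$, which, since $a_2\ge a_1$, is stronger than what the comparison step requires.

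I expect the main obstacle to be the uniform-in-$t$ validity of the envelope from Theorem~\ref{theo:exponential_stability}, which was derived for trajectories whose pair $(z(t),\dot e(t))$ stays in the compact RTF domain $S$, whereas Theorem~\ref{theo:RCBF_set} only certifies $\tau$-recurrence of $S_V\subseteq S$, allowing transient excursions from $S_V$ in principle. I would handle this by a first-exit/bootstrap argument: let $t^\star := \inf\{t\ge 0 : (z(t),\dot e(t))\notin S\ \text{or}\ h(z(t))<0\}$, run the comparison-lemma estimate above on $[0,t^\star)$, and show that the resulting strict lower bound on $h(z(t))$ on this interval contradicts the existence of a finite $t^\star$, so that safety extends to all $t\ge 0$.
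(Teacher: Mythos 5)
Your proposal is correct and follows essentially the same route as the paper's proof: an integrating-factor/comparison-lemma bound on $h(z(t))$ using $\nabla h^\top \dot z_s \ge -\alpha h$, Cauchy--Schwarz with $\|\nabla h\|\le C_h$, the exponential envelope on $\|\dot e(t)\|$ from Theorem~\ref{theo:exponential_stability}, and the margin $h(z)\ge V(z,\dot e)/\alpha_e$ supplied by $(z,\dot e)\in S_V$, closed with $a_2\ge a_1$. Your additional first-exit/bootstrap step addressing the uniform-in-$t$ validity of the envelope (since Theorem~\ref{theo:RCBF_set} only certifies $\tau$-recurrence of $S_V$, not invariance of $S$) is a point of rigor the paper's proof does not address, and is a worthwhile refinement rather than a different approach.
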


\begin{proof}
See Appendix~\ref{app:safe_assess}
\end{proof}

\begin{remark}[Recurrence vs. Safety]\label{rem:recurrence_safety}
Theorem~\ref{theo:safe_assess} reveals that although $h_V$ is a recurrent CBF and trajectories may temporarily leave the recurrent set $S_V$, they never leave the safe set $\s_{\mathrm{FoM}}$. The reason is that when $\beta>\alpha$, tracking deviations are corrected faster than the system can approach the safety boundary, ensuring $h(z_{\cl}(t))\ge 0$ for all time despite the recurrent behavior of $h_V$. This is illustrated in Fig.~\ref{fig:RTF_demo}: $V$ is recurrent, while $h$ remains strictly nonnegative.
\end{remark}

\vspace{-2em}
\begin{figure}[htbp]
    \centering
    \includegraphics[width=1.0\linewidth]{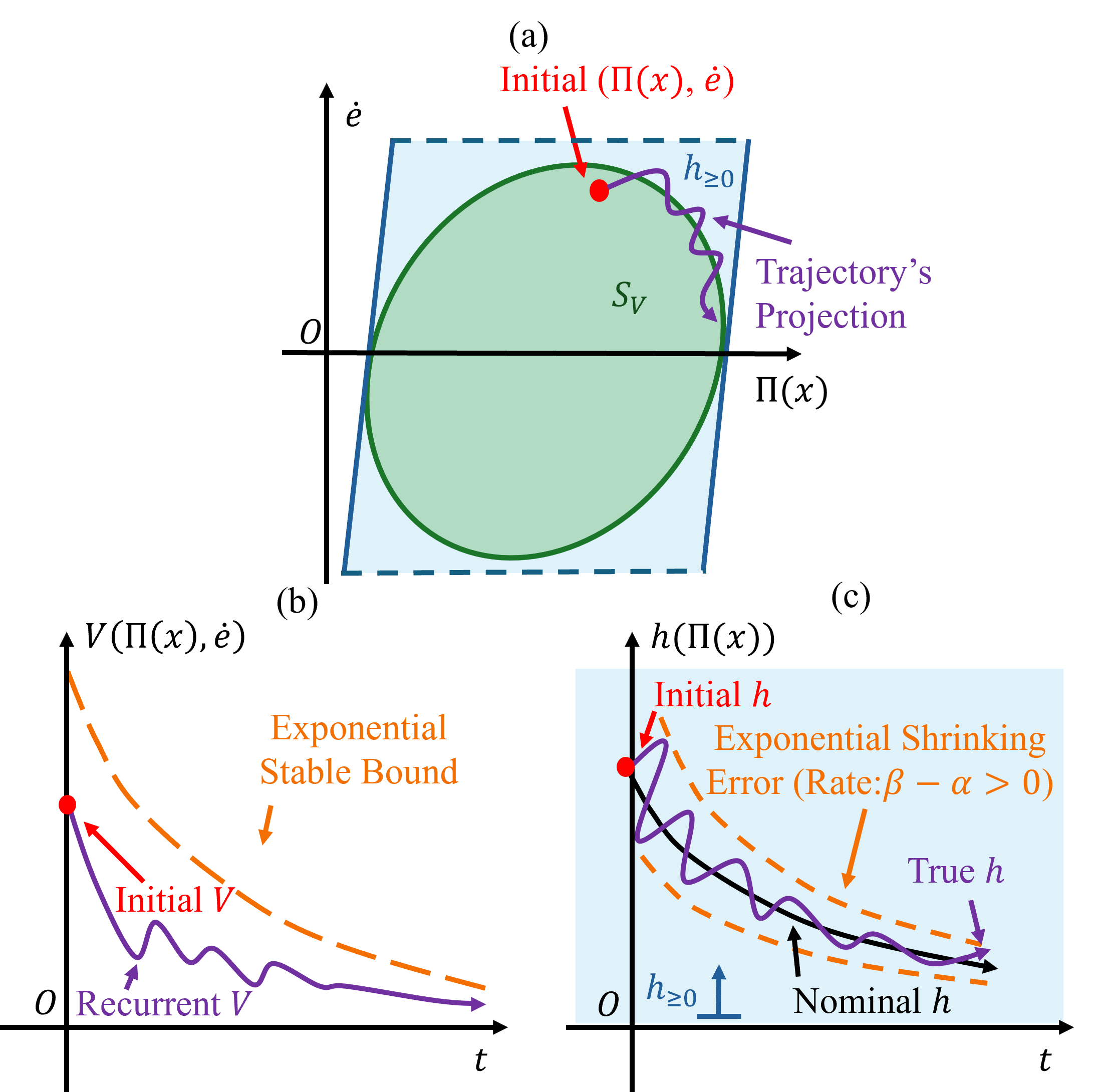}
    \caption{(a) $V, h, S_V$ and a safe trajectory's projection when $n =1$. (b) Time evolution of V. (c) Time evolution of $h$.}
    \label{fig:RTF_demo}
\end{figure}

\vspace{-1em}
\subsection{Effect of Disturbances}
In practice, feedback controllers $K(x,k(\Pi(x)))$ and $k(\Pi(x))$ often cannot achieve ideal exponential tracking due to model uncertainties, actuation limits, and external disturbances. We model this mismatch via a bounded disturbance $d$, under which the tracking error becomes Input-to-State Stable (ISS): 
\begin{align}
\|\dot{e}(t)\| \le M \|\dot{e}_0\| e^{-\beta t} + \mu(\|d\|_{\infty})
\end{align}
for some class-$\mathcal{K}$ function $\mu(\cdot)$. The key insight is that while disturbances prevent the tracking error from vanishing completely, they introduce only a bounded steady-state offset $\mu(\|d_{\infty}\|)$.

To accommodate this, we modify the RTF condition~\eqref{eq:exp_RTF} to account for the disturbance-induced offset:
\begin{align}
\label{eq:practical_RTF}
\min_{t \in T_{S}(x; \tau)} & e^{\beta t} (V(z_{\cl}(t), \dot{e}(t)) - \iota(\|d\|_{\infty})) \nonumber\\
& \le V(z_0, \dot{e}_0) - \iota(\|d\|_{\infty}),
\end{align}
where $\iota(\|d\|_{\infty}) = \frac{a_2e^{\beta \tau}\mu(\|d\|_{\infty})}{M}$~\cite[Theorem 1]{siegelmann2025data} quantifies the impact of disturbances on the RTF.

Besides, considering the tracking disturbance, rather than enforcing safety, we require Input-to-State Safety (ISSf), namely, the forward invariance of an enlarged set $S_d \supseteq h_{\ge 0}$ defined by
\begin{align}
S_d & = \{\, z \in \mathcal{Z} : h_d(z) \ge 0 \}, \\ 
h_d(z) & = h(z) + \gamma(\|  d \|_\infty)/\alpha_e,
\end{align}
where $\gamma$ is a class-$\mathcal{K}$ function to be specified. This leads to a robust safety guarantee through an enlarged recurrent set:
\begin{align}
    & S_{Vd} = \{(z, \dot{e}) \in \mathcal{Z} \times \R^{n}: h_{Vd}(z, \dot{e}) \ge 0\},\\
    & h_{Vd}(z, \dot{e}) = h_{V}(z, \dot{e}) + \gamma(\|d\|_{\infty}),
\end{align}
where $\gamma(\|d\|_{\infty}) = \frac{(2\beta - \alpha)\iota(\|d\|_{\infty})}{\alpha}$ provides a safety margin that scales with the disturbance magnitude. The following corollary formalizes this ISSf property.

\begin{corollary}[Input-to-State Safety]\label{cor:ISS_safety}
For system~\eqref{eq:closed_loop_FoM} with $\beta > \alpha$, if an initial state $x_0 \in \s_{\mathrm{FoM}}$ satisfies $(z_0, \dot{e}_0) \in S_{Vd}$ and the ISS tracking condition~\eqref{eq:practical_RTF} holds, then the ISSf for all $t \geq 0$, where $\gamma(\|d\|_{\infty}) = (2\beta -\alpha)\iota(\|d\|_{\infty})/\alpha$ and $\alpha_e = \frac{a_1^2(\beta - \alpha)}{a_2C_{h} M}$ is guaranteed.
\end{corollary}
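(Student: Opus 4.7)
The plan is to reduce Corollary~\ref{cor:ISS_safety} to the nominal results (Theorems~\ref{theo:RCBF_set}--\ref{theo:safe_assess}) by absorbing the disturbance-induced offset into a shift of the tracking function and a corresponding safety margin on the barrier function, so that the RCBF/safety arguments already developed for the disturbance-free case can be replayed almost verbatim.

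First, I would define the shifted tracking function $\tilde V(z,\dot e) := V(z,\dot e) - \iota(\|d\|_\infty)$ and observe that the perturbed recurrence condition~\eqref{eq:practical_RTF} is exactly the exponential $\tau$-recurrence~\eqref{eq:exp_RTF} written for $\tilde V$. Unrolling this recurrence between successive returns to $S$, along the lines of Theorem~\ref{theo:exponential_stability}, would produce the ISS tracking estimate $\|\dot e(t)\| \le M\|\dot e\|e^{-\beta t} + \mu(\|d\|_\infty)$ that underlies the hypothesis, with $M$ and $\mu$ tracing back to $a_1,a_2,\tau$ and $\iota$ in the now-standard way.

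Next, I would carry out the construction in Theorem~\ref{theo:RCBF_set} with the shifted quantities in place of the nominal ones. The key identity $\alpha_e\,\gamma(\|d\|_\infty)=\iota(\|d\|_\infty)$ makes the combination $-\tilde V+\alpha_e(h-\gamma)$ coincide with $h_{Vd}=-V+\alpha_e h-\gamma$, so the set $S_{Vd}$ is exactly the disturbance-inflated analogue of $S_V$. Replaying the RCBF inequality---using $\beta>\alpha$ to dominate the transient decay of $\tilde V$ by the class-$\mathcal{K}$ decay of $h$, and treating $\iota$ as a constant forcing term---then certifies that $S_{Vd}$ is control $\tau$-recurrent under the same tracking controller $K$. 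Finally, I would lift the safety argument of Theorem~\ref{theo:safe_assess}: the CBF property of $h$ combined with Assumption~\ref{ass:boundness} yields $\dot h(z(t)) \ge -\alpha h(z(t)) - C_h \|\dot e(t)\|$, and substituting the ISS bound and integrating produces, schematically,
\begin{align*}
    h(z(t)) \ge e^{-\alpha t}\!\left[h(z) - \tfrac{C_h M \|\dot e\|}{\beta-\alpha}\right] - \tfrac{C_h \mu(\|d\|_\infty)}{\alpha},
\end{align*}
and the hypothesis $h_{Vd}(z,\dot e)\ge 0$ supplies exactly the slack $\gamma(\|d\|_\infty)=\iota/\alpha_e$ required to dominate both the transient bracket and the persistent offset.

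The main obstacle will be handling the persistent-disturbance tail. In Theorem~\ref{theo:safe_assess} the exponential decay of $\|\dot e\|$ causes the error integral to vanish as $t\to\infty$, whereas here the ISS residual $\mu(\|d\|_\infty)$ never dies out, so any slack in bounding $\int_0^t e^{-\alpha(t-s)}\|\dot e(s)\|\,ds$ inflates the required safety margin. Verifying that the specific scaling $\gamma(\|d\|_\infty)=\iota(\|d\|_\infty)/\alpha_e$ is tight---rather than merely sufficient up to constants---demands careful bookkeeping of the Lipschitz, RTF, and CBF constants through the recurrence-to-ISS conversion, and will likely dictate the precise form of the class-$\mathcal{K}$ function $\mu$ that must appear in~\eqref{eq:practical_RTF}.
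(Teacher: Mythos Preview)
Your plan---shift $V$ by the disturbance offset $\iota$, replay the RCBF construction of Theorem~\ref{theo:RCBF_set}, and then rerun the integral safety estimate of Theorem~\ref{theo:safe_assess} with the ISS bound in place of pure exponential decay---is exactly the combination the paper invokes (its own proof is omitted for space and simply cites this merging of the two theorems with~\eqref{eq:practical_RTF}). One small correction: with $\alpha_e\gamma=\iota$ the combination $-\tilde V+\alpha_e(h-\gamma)$ collapses to $h_V$, not $h_{Vd}$, so the $\gamma$ margin must be carried as a separate constant rather than absorbed into a shifted barrier; this is precisely the constant-tracking bookkeeping you already flag as the main obstacle in your final paragraph.
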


\begin{proof}
See Appendix~\ref{app:ISSf}.
\end{proof}

\section{Case Study}
\label{sec:simulations}
In this section, we use a 2D double-integrator example as an illustrative proof-of-concept to visualize the role of RTFs and the induced recurrent safe set $S_V$. Consider the dynamics: $\ddot z = u,$ where $z \in \R^2$ is the agent's position, the FoM's dimension is 4, the RoM is $\dot z =v$, and $u \in \R^2$. Our goal is to navigate the system from a start position $z_0$ to a goal $z_g$ while avoiding obstacles. A simple solution is to realize the desired velocity $\dot z_d = -K_p (z - z_g)$ which is a proportional controller with gain $K_p \in \R_{>0}$.

In our setting, we consider a planar projection $\Pi(\X) \subseteq \R^2$ occupied with $N$ circular (closed-disk) obstacles. The $i$-th obstacle is $\Oi = \{z \in \R^2 : \|z - o_i\|\le r_i \}, i\le N,$ whose center is $o_i$ and radius is $r_i$. Then the obstacle's configuration space is the union $\Of = \cup_{i=1}^{N} \Oi$, and the collision-free configuration space is $\Pi(\X)/\Of$. The CBF we used for the RoM is $h(z): = \min_{i \in [1,N]\subset \mathbb{N}_{+}} \|z - o_i\| - r_i$, and its gradient $\nabla h(z) = \frac{(z-o_i)^{\top}}{\|z- o_i\|} = n_i^\top$ is an unit vector pointing from the nearest obstacle $\Oi$ to the agent. Then the safety velocity $\dot z_s$ w.r.t. $\Oi$ is given by the following quadratic program:
\begin{align}
    \arg & \min_{\dot z_s \in \R^{2}} (\dot z_s - \dot z_d)^{\top}(\dot z_s - \dot z_d)\\
    \mathrm{s.t.} & n_i^\top \dot z_s \ge - \alpha (\|z- o_i\| - r_i),
\end{align}
and its solution is $\dot z_s = \dot z_d + \max\{-n_i^\top \dot z_d - \alpha(\|z- o_i\| - r_i), 0\}\cdot n_i$~\cite{molnar2021model}. Then the safe velocity tracking controller can be defined as $u = -K_D(\dot z - \dot z_s)$ with $K_D > 0$. 

\begin{figure}[htbp]
\centering
\includegraphics[width=1.0\linewidth]{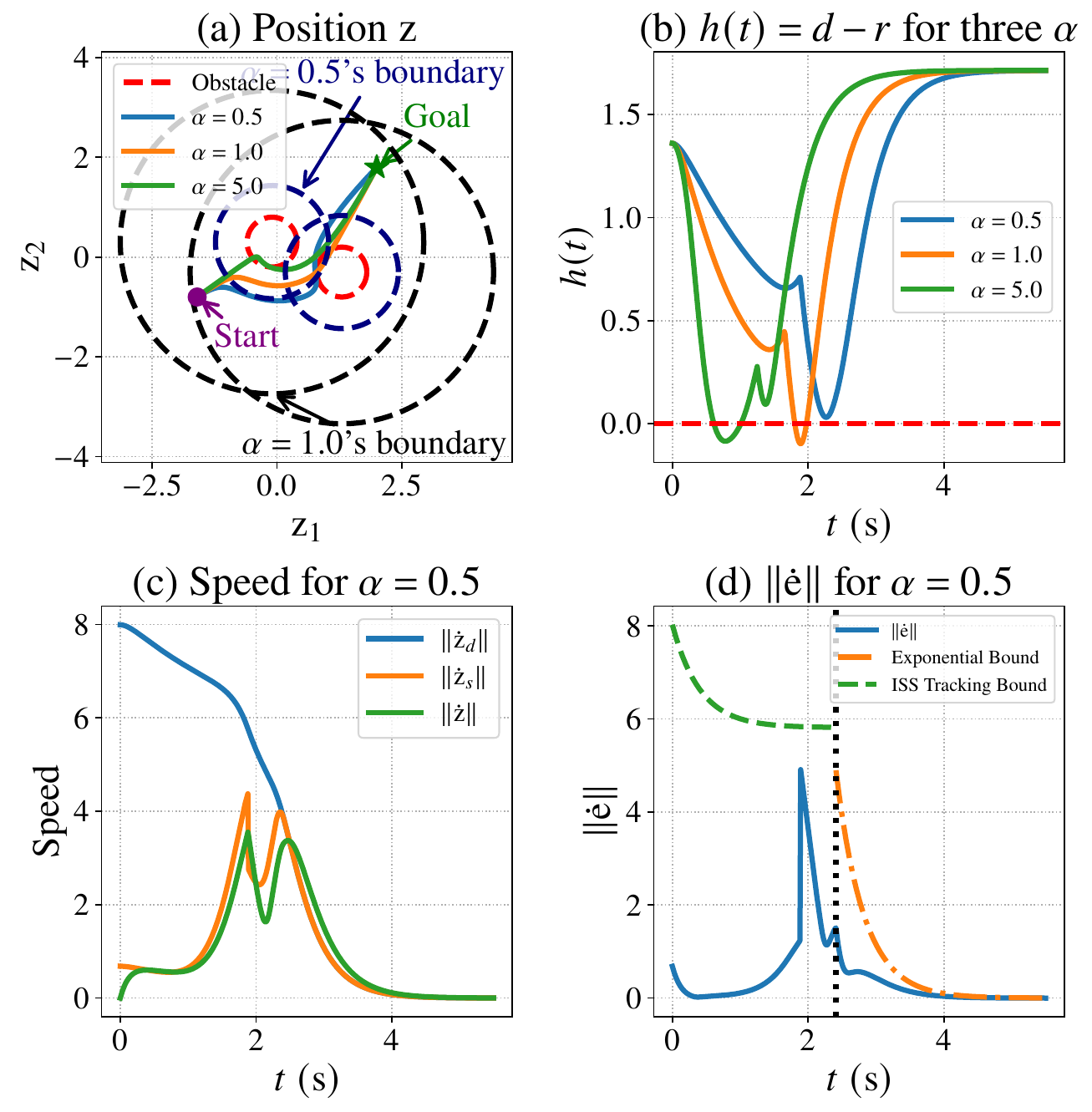}
\caption{(a) 2D path with circular obstacles. (b) Barrier value $h$ vs. time for the three $\alpha$. (c) Speeds for $\alpha=0.5$ of $\|\dot z_d\|$, $\|\dot z_s\|$, and $\|\dot z\|$. (d) Tracking-error speed $\|\dot e\|$ for $\alpha=0.5$.}
\label{fig:2D_DI}
\end{figure}

We illustrate three different projected trajectories with three different choices of $\alpha = 0.5, 1, 5$ respectively, and $K_P = 1.8, K_D = 8$. 
Using standard linear control design and~\cite{molnar2021model}, we can show that when the CBF constraint is inactive, this choice of parameters leads to an exponentially convergent tracking error with  $\beta = 2.45$ and $M = 3.24$. This allows us to characterize the set $S_V$ of safe initial conditions.
As shown in Fig~\ref{fig:2D_DI} (a), the red dashed lines show the boundary of the obstacles $\{z \in \R^2 | \|z - [-0.1, 0.3]^\top\| \le 0.5 \cup \|z - [1.3, -0.3]^\top\| \le 0.5\},$ the union of blue dashed circles denotes the unsafe initial states when $\alpha = 0.5$, and the black ones denote the unsafe initial states when $\alpha = 1$. As Fig.~\ref{fig:2D_DI} (a) and (b) show, when the CBF for RoM is invalid ($\alpha$ = 5) or the initial state is outside $S_V$ ($\alpha$ = 1), the system's safety cannot be guaranteed, and the minimum of $h(t)$ is less than 0. Only when Theorem~\ref{theo:RCBF_set} and Theorem~\ref{theo:safe_assess} are satisfied with a valid CBF in the RoM, can we ascertain that the trajectory is safe all the time ($\alpha = 0.5$). Besides, Fig.~\ref{fig:2D_DI} (c) and (d) verified that the tracking error shrink to 0 exponentially fast when~\eqref{eq:exp_RTF} is satisfied, and the tracking error is bounded by the ISS tracking bound computed by the $\|d\|_{\infty} = \|\ddot z_s\|_{\infty}$ when the safety filter is activated. 

\section{Conclusion and Future Work}
\label{sec:conclusion}

We developed a recurrence-based framework for layered safety-critical control of high-order nonlinear systems. By introducing Recurrent Tracking Functions (RTFs), we relax the classical Lyapunov-based tracking requirement to a finite-time recurrence condition, thereby allowing transient deviations while still preserving safety. In particular, any norm of an exponentially stable tracking error can serve as an RTF, yielding a certificate applicable to general nonlinear systems. We showed that augmenting RoM-based CBFs with RTFs leads to recurrent CBFs of the form $h_V(z, \dot{e}) = -V(z, \dot{e}) + \alpha_e h(z)$, which guarantee safety for all initial conditions in $S_V = \{h_V \ge 0\}$. The condition $\beta > \alpha$ ensures that the tracking error converges faster than the system approaches the safety boundary, thereby maintaining $h(z_{\cl}(t)) \ge 0, \forall t \ge 0,$ despite the recurrent nature of $h_V$. Numerical simulations further validate the theoretical results.

Future work will build on recent advances in data-driven verification of recurrent sets~\cite{siegelmann2025data,sbm2022cdc} to develop learning-based approaches for constructing RTFs directly from trajectory data, eliminating the need for explicit tracking function design. Experimental validation on physical robotic platforms also remains an important direction for future research.

\section{Acknowledgments}
The authors would like to thank Aaron Ames for suggesting this problem as an application of recurrence-based analysis.

\printbibliography
\vspace{-2.5mm}
\appendix
\subsection{Proof of Theorem~\ref{theo:RCBF_set}}\label{app:RCBF_set}
\begin{proof}
For simplicity let $T$ denote $T_{S}((z,\dot e); \tau)$. To prove the Theorem~\ref{theo:RCBF_set}, for any $x\in \s_{\mathrm{FoM}}$ s.t. $(\Pi(x),\dot e)\in S_V$, our goal is to find a feedback control signal $u(\cdot) = K(x(\cdot), k(\Pi(x(\cdot)))) \in \U^{(0, \infty]}$ and $\gamma:\R \rightarrow \R_{>0}$ such that:
\begin{align}
\label{eq:formulate_RCBF_h_V}
    \max_{t\in T} \; e^{\gamma(h_{V}(z(t), \dot e(t)))} h_{V}(z(t),\dot e(t)) \ge h(z, \dot e).
\end{align}

Then we can prove, when $u(\cdot)$ is the feedback controller that satisfies~\eqref{eq:exp_RTF} and $\gamma(h_{V}(z(t))) = \alpha t$, \eqref{eq:formulate_RCBF_h_V} is satisfied and $S_{V}$ is the corresponding control $\tau$-recurrent set. 

For the trajectory $x(\cdot)$ generated by the feedback controller that satisfies~\eqref{eq:exp_RTF}, we have 
\begin{align}
    \max_{t\in T} & \; e^{\alpha t}h_{V}(z(t), \dot e(t)) - h_{V}(z, \dot e)\\
    = \max_{t \in T} &  \int_{0}^{t} \frac{d(e^{\alpha s}h_{V}(z(s), \dot e(s)))}{ds} ds \label{theo:fundamental_theorem_of_calculus}\\
    = \max_{t \in T} &  \int_{0}^{t}e^{\alpha s}(-\dot{V}(z(s), \dot e(s)) \dots \nonumber\\
    & \quad + \alpha_e \nabla h ^\top (\dot{z}_{s} + \dot e) + \alpha h_{V}(z(s), \dot e(s))) ds \label{theo:definition_of_h_V}\\
    \ge \max_{t \in T} &  \int_{0}^{t} e^{\alpha s}(-\dot{V}(z(s), \dot e(s)) 
    - \alpha_{e} \alpha h(z(s)) \dots \nonumber\\
    & \quad - \alpha_{e} \|\nabla h\|\|\dot e\| + \alpha h_{V}(z(s), \dot e(s)))ds \label{theo:recrrent_cauchy}\\
\end{align}
\begin{align}
    \ge \max_{t \in T} &  \int_{0}^{t} e^{\alpha s}\!(-\dot V(z(s), \dot e(s)) - (\alpha+\frac{\alpha_e C_h}{a_1})\dots \nonumber\\
    & \cdot V(z(s), \dot e(s)))ds \label{theo:bound_RTF}\\
    = \max_{t \in T} &  \int_{0}^{t} - e^{-\frac{\alpha_e C_h}{a_1} s} \cdot\frac{d (e^{(\alpha + \frac{\alpha_e C_h}{a_1}) s}V(z(s), \dot e(s)))}{ds}ds\label{theo:integration_by_parts}\\
    = \max_{t \in T} &  V(z, \dot e) - e^{\alpha t} V(z(t), \dot e(t)) \dots \nonumber\\
    & \quad + \int_{0}^{t} \frac{d(e^{-\frac{ \alpha_e C_h}{a_1} s})}{ds} \, e^{(\alpha+\frac{\alpha_e C_h}{a_1})s}  V(z(s), \dot e(s)) \, ds \\
    = \max_{t \in T} &  V(z, \dot e) - e^{\alpha t} V(z(t), \dot e(t)) \dots \nonumber\\
    & \quad - \frac{\alpha_e C_h}{a_1} \int_{0}^{t} e^{\alpha s} V(z(s), \dot e(s)) ds\\
    \ge \max_{t \in T} &  V(z, \dot e) - e^{\alpha t} V(z(t), \dot e(t)) \dots \nonumber\\
    & \quad - \frac{a_2 \alpha_e C_h}{a_1} \int_{0}^{t} M e^{-(\beta - \alpha) s} \|\dot e\| \, ds \label{theo:recurrence_lineara2_1}\\
    \ge \max_{t \in T} &  [1 - \frac{a_2 \alpha_e C_h M}{a_1^2 (\beta - \alpha)}(1 - e^{-(\beta - \alpha) t})] V(z, \dot e) \dots \nonumber \\
    & \quad - e^{\alpha t} V(z(t), \dot e(t)) \label{theo:recurrence_lineara1_2}\\
    = \max_{t \in T} &  [1 - \frac{a_2 \alpha_e C_h M}{a_1^2(\beta - \alpha)}(1 - e^{-(\beta - \alpha) t})] \dots \nonumber\\
    & \quad \cdot [V(z, \dot e) - e^{\beta t} V(z(t), \dot e(t))], \label{theo:alpha_e} 
\end{align}
where the equality~\eqref{theo:fundamental_theorem_of_calculus} holds from the Fundamental Theorem of Calculus, the equality~\eqref{theo:definition_of_h_V} follows from the principle of integration by parts, the equality~\eqref{theo:integration_by_parts} is equavalent to~\eqref{eq:h_V}, and the equality~\eqref{theo:alpha_e} holds since $\alpha_e = \frac{a_1^2(\beta - \alpha)}{a_2C_{h} M}$. The inequality~\eqref{theo:recrrent_cauchy} can be obtained from Cauchy Inequality and $\nabla h(z)^{\top}\dot z_s \ge -\alpha h(z)$, the inequality~\eqref{theo:bound_RTF} and the inequality~\eqref{theo:recurrence_lineara1_2} can be obtained from the linear lower bound of the tracking error $a_1 \|\dot e\| \le V(\dot e)$, and the inequality~\eqref{theo:recurrence_lineara2_1} is equvalent to the upper bound of the tracking error $V(\dot e)\le a_2 \|\dot e\|$.

Note that $e^{-(\beta - \alpha)t} - e^{-(\beta - \alpha)\tau} \ge 0, \forall t \in T$, since $\beta > \alpha$ and $\min\limits_{t\in T} e^{\beta t}V(z(t), \dot e(t)) - V(z, \dot e) \le 0$, i.e. $\max\limits_{t\in T}V(z, \dot e) - e^{\beta t}V(z(t), \dot e(t)) \ge 0$,then we have
\begin{align}
& \max\limits_{t\in T} [e^{-(\beta - \alpha)t} - e^{-(\beta - \alpha)\tau}] \dots \\
& \quad \cdot[V(z, \dot e) - e^{\beta t}V(z(t), \dot e(t))]\\
\ge &  [e^{-(\beta - \alpha)t} - e^{-(\beta - \alpha)\tau}] \dots \\
 & \quad \cdot \max\limits_{t\in T} [V(z, \dot e) - e^{\beta t}V(z(t), \dot e(t))]\label{theo:max}\\
\ge &  0,\\
\end{align}
where~\eqref{theo:max} holds naturally from the property of the maximum,
\begin{align}
\text{i.e., } &  \eqref{theo:alpha_e} \nonumber\\
\ge &  e^{-(\beta - \alpha) \tau} \max_{t \in T} [V(z, \dot e) - e^{\beta t} V(z(t), \dot e(t))] \ge  0, 
\end{align}
thus,
\begin{align}
        \max_{t\in T} & \; e^{\alpha t}h_{V}(z(t), \dot e(t)) \ge h_{V}(z, \dot e),
\end{align}
and $S_V$ is the corresponding control $\tau$-recurrent set~\cite[Theorem 2]{liu2025RCBF}, which completes the proof.
\end{proof}

\subsection{Proof of Theorem~\ref{theo:safe_assess}}
\label{app:safe_assess}
\begin{proof}
To prove the Theorem~\ref{theo:safe_assess}, note that
states in $S_V$ impose the pointwise budget inequality $\frac{V(z,\dot e)}{\alpha_e} \le h(z).$
We know that $\nabla h(z)^{\top}\dot z_s \ge -\alpha h(z)$ and $\|\nabla h(z)\|\le C_h$. Along the FoM trajectory $x(\cdot)$, we have 
\begin{align} 
& e^{\alpha t} h(z(t)) - h(z) \\
= & \int_{0}^{t} \frac{d}{ds}(e^{\alpha s} h(z(s))) ds \label{theo:fundamental}\\
= & \int_{0}^{t} e^{\alpha s} (\dot h(z(s)) + \alpha h(z(s))) ds \label{theo:Leibniz_rule}\\
= & \int_{0}^{t} e^{\alpha s} (\nabla h(z(s))^{\top}(\dot z_s + \dot e(s)) + \alpha h(z(s))) ds\\
\ge & \int_{0}^{t} e^{\alpha s} (- \alpha h(z(s)) + \nabla h(z(s))^{\top}\dot e(s) + \alpha h(z(s)))ds \label{theo:safe_rom}\\
= & \int_{0}^{t} e^{\alpha s} \nabla h(z(s))^{\top} \dot e(s) ds\\
\ge & - C_h \int_{0}^{t} e^{\alpha s} \|\dot e(s)\| ds, \label{theo:safe_cauchy}
\end{align}
where the equality~\eqref{theo:fundamental} follows from the Fundamental Theorem of Calculus, and the equality~\eqref{theo:Leibniz_rule} holds naturally from Leibniz rule. Inequalities~\eqref{theo:safe_rom} and~\eqref{theo:safe_cauchy} follow from $\nabla h(z)^{\top}\dot z_s \ge -\alpha h(z)$ and Cauchy Inequality, respectively. Then $\forall t$, we have 
\begin{align}
h(z(t)) \ge e^{-\alpha t} h(z)- C_h \int_{0}^{t} e^{-\alpha (t-s)} \|\dot e(s)\|ds.
\end{align}
To be further, $\forall t \ge 0$, we have
\begin{align}
 & h(z(t)) \nonumber \\
 \ge & e^{-\alpha t} h(z)- C_h \int_{0}^{t} e^{-\alpha (t-s)} \|\dot e(s)\|ds \\
 \ge &  e^{-\alpha t} h(z) - C_h M e^{-\alpha t} \int_{0}^{t} e^{-(\beta - \alpha) s} \|\dot e\|ds \label{theo:safe_exponential}\\
 = &  e^{-\alpha t} h(z)
-\frac{C_h M}{\beta-\alpha}e^{-\alpha t}(1-e^{-(\beta-\alpha)t})\|\dot e\| \\
 \ge &  e^{-\alpha t} \frac{V(z,\dot e)}{\alpha_e} - \frac{C_h M}{\beta-\alpha}e^{-\alpha t}(1-e^{-(\beta-\alpha)t})\|\dot e\| \label{theo:initial_set}\\
 \ge &  e^{-\alpha t} V(z,\dot e) [\frac{1}{\alpha_e} - \frac{C_h M}{a_1 (\beta - \alpha)} (1 - e^{-(\beta- \alpha)t})] \label{theo:last_linear}\\
 \ge &  e^{-\alpha t} V(z,\dot e) [\frac{1}{\alpha_e} - \frac{C_h M}{a_1 (\beta - \alpha)}] \label{theo:1}\\
 = &  e^{-\alpha t} V(z,\dot e) \frac{C_h M}{a_1 (\beta - \alpha)} (\frac{a_2}{a_1} - 1)\\
 \ge &  0, \label{theo:safe_linear_contained}
\end{align}
where the inequality~\eqref{theo:safe_exponential} holds from Theorem~\ref{theo:exponential_stability}, i.e., the exponential stability of tracking error under recurrence condition, the inequality~\eqref{theo:initial_set} holds since $(z,\dot e) \in S_V$, the inequality~\eqref{theo:last_linear} is obtained from the linear lower bound of the tracking error $a_1 \|\dot e\| \le V(z,\dot e)$, and the inequality~\eqref{theo:safe_linear_contained} follows from the fact that $a_1 \le a_2$ and $\beta > \alpha$. 

Since $h(\Pi(x(t))) \ge 0, \forall t \ge 0$, then we have $x(t) \in \s_{\mathrm{FoM}}, \forall t \ge 0$, i.e., the trajectory is always in the safe region, which completes the proof.
\end{proof}

\subsection{Proof of Corollary~\ref{cor:ISS_safety}}
\label{app:ISSf}
\begin{proof}
For simplicity, we use the same $T$ denoted in the Theorem~\ref{theo:RCBF_set} and $\gamma, \alpha, \mu$ denote $\gamma(\|d\|_{\infty}), \alpha(\|d\|_{\infty}), \mu(\|d\|_{\infty})$ respectively. For the trajectory $x(t)$ generated by the feedback controller $K(x, k(\Pi(x)))$ that satisfies~\eqref{eq:exp_RTF}, we have 
\begin{align}
    \max_{t\in T} & \; e^{\alpha t}(h_{V}(z(t), \dot e(t)) + \gamma) - (h_{V}(z, \dot e)+\gamma)\\
    = \max_{t \in T} &  \int_{0}^{t} \frac{d(e^{\alpha s}(h_{V}(z(s), \dot e(s))+\gamma))}{ds} ds\\
    = \max_{t \in T} &  \int_{0}^{t}e^{\alpha s}(-\dot{V}(z(s), \dot e(s)) + \alpha_e \nabla h ^\top (\dot{x}_{s} + \dot e) \dots \nonumber\\
    & \quad + \alpha (h_{V}(z(s), \dot e(s))) + \gamma) ds\\
    \ge \max_{t \in T} &  \int_{0}^{t} e^{\alpha s}(-\dot{V}(z(s), \dot e(s)) 
    - \alpha_{e} \alpha h(z(s)) \dots \nonumber\\
    & \quad + \alpha_e \|\nabla h\|\|\dot e\| + \alpha (h_{V}(z(s), \dot e(s))) + \gamma)ds \label{coro:recrrent_cauchy}\\
    \ge \max_{t \in T} & \int_{0}^{t} e^{\alpha s}(-\dot V(z(s), \dot e(s)) - (\alpha + \frac{\alpha_e C_h}{a_1}) \dots \nonumber\\
    & \cdot (V(z(s), \dot e(s)-\gamma'))ds \label{coro:bound_RTF}\\
    = \max_{t \in T} &  \int_{0}^{t} - e^{-\frac{\alpha_e C_h}{a_1} s} \frac{\mathrm{d} (e^{(\alpha + \frac{\alpha_e C_h}{a_1}) s}V(z(s), \dot e(s)))}{\mathrm{d}s} \dots \nonumber\\
    - & \gamma' e^{-\frac{\alpha_e C_h}{a_1} s}\frac{\mathrm{d} e^{(\alpha + \frac{\alpha_e C_h}{a_1}) s}}{\mathrm{d}s} \mathrm{d}s\label{coro:recurrence_lineara1_1} \\
    = \max_{t \in T} &  V(z, \dot e) - e^{\alpha t} V(z(t), \dot e(t)) + (e^{\alpha t} -1)\gamma' \dots \nonumber\\
    + \int_{0}^{t} & \frac{d(e^{-\frac{\alpha_e C_h}{a_1} s})}{ds} e^{(\alpha+\frac{\alpha_e C_h}{a_1}) s} (V(z(s), \dot e(s)) - \gamma')ds \\
    = \max_{t \in T} & V(z, \dot e) - e^{\alpha t} V(z(t), \dot e(t)) + (e^{\alpha t} -1)\gamma' \dots \nonumber\\
    + \int_{0}^{t} & \frac{d(e^{-\frac{\alpha_e C_h}{a_1} s})}{ds} e^{(\alpha+\frac{\alpha_e C_h}{a_1}) s} V(z(s), \dot e(s))ds \dots \nonumber\\
    + & \frac{\alpha_e C_h}{a_1} \int_{0}^{t} e^{\alpha s} \gamma' ds\\
    = \max_{t \in T} &  V(z, \dot e) - e^{\alpha t} V(z(t), \dot e(t)) + (e^{\alpha t} -1)\gamma' \dots \nonumber\\
    - \frac{\alpha_e C_h}{a_1} & \int_{0}^{t} e^{\alpha s} V(z(s), \dot e(s))ds + \frac{\alpha_e C_h}{a_1 \alpha} (e^{\alpha t} - 1) \gamma' ds\\
    = \max_{t \in T} &  V(z, \dot e) - e^{\alpha t} V(z(t), \dot e(t)) \dots \nonumber\\
    - \frac{\alpha_e C_h}{a_1} & \int_{0}^{t} e^{\alpha s} V(z(s), \dot e(s)) ds + (e^{\alpha t} -1)\gamma
\end{align}
\begin{align}
    \ge \max_{t \in T} & V(z, \dot e) - e^{\alpha t} V(z(t), \dot e(t)) \dots \nonumber\\
    - \frac{a_2 \alpha_e C_h}{a_1} & \int_{0}^{t} (M e^{-(\beta - \alpha) s} \|\dot e\| + e^{\alpha s} \mu) ds + (e^{\alpha t} - 1) \gamma \label{coro:recurrence_lineara2_1}\\
    \ge \max_{t \in T} &  [1 - \frac{a_2 \alpha_e C_h M}{a_1^2 (\beta - \alpha)}(1 - e^{-(\beta - \alpha) t})] V(z, \dot e) \dots \nonumber \\
    - & e^{\alpha t} V(z(t), \dot e(t)) + (e^{\alpha t} -1)(\gamma - \frac{a_2 \alpha_e C_h}{a_1 \alpha} \mu) \label{coro:recurrence_lineara1_2}\\
    = \max_{t \in T} & [1 - \frac{a_2 \alpha_e C_h M}{a_1^2(\beta - \alpha)}(1 - e^{-(\beta - \alpha) t})] \dots \nonumber\\
    \cdot [ & V(z, \dot e) - e^{\beta t} V(z(t), \dot e(t))] - (e^{\alpha t} -1) \dots \nonumber\\
     \cdot & (\gamma - \frac{a_2 \alpha_e C_h}{a_1 \alpha} \mu), \\
     = \max_{t \in T} & e^{-(\beta - \alpha)t} [V(z, \dot e) - \iota - e^{\beta t} (V(z(t), \dot e(t))-\iota)] \dots \nonumber\\
    - & e^{-(\beta-\alpha)t} (e^{\beta t} -1)\iota-(e^{\alpha t} -1)(\gamma - \frac{a_2 \alpha_e C_h}{a_1 \alpha} \mu) \label{coro:recurrence_sign_preserve}
\end{align}
where the inequality~\eqref{coro:recrrent_cauchy} can be obtained from Cauchy Inequality and $\nabla h(z)^{\top}\dot z_s \ge -\alpha h(z)$, the inequality~\eqref{coro:bound_RTF}, and the inequality~\eqref{coro:recurrence_lineara1_2} can be obtained from the linear lower bound of the tracking error $a_1 \|\dot e\| \le V(z,\dot e)$, where $\gamma' = \frac{\alpha a_1}{\alpha a_1 + \alpha_e C_h} \gamma$, and the inequality~\eqref{coro:recurrence_lineara2_1} is equvalent to the upper bound of the tracking error $V(z,\dot e)\le a_2 \|\dot e\|$.

Note that $e^{-(\beta - \alpha)t} - e^{-(\beta - \alpha)\tau} \ge 0, \forall t \in T$, since $\beta > \alpha$ and $\min\limits_{t\in T} e^{\beta t} (V(z(t), \dot e(t)) - \iota) - (V(z, \dot e) - \iota) \le 0$, then for $t\in T$, we have 
\begin{align}
\max \limits_{t\in T} & [e^{-(\beta - \alpha)t} - e^{-(\beta - \alpha)\tau}] \dots \\
& \quad \cdot[(V(z, \dot e) - \iota) - e^{\beta t}(V(z(t), \dot e(t)) - \iota)]\\
\ge &  [e^{-(\beta - \alpha)t} - e^{-(\beta - \alpha)\tau}] \dots \\
 & \quad \cdot \max\limits_{t\in T} [(V(z, \dot e) - \iota) - e^{\beta t}(V(z(t), \dot e(t))-\iota)] \label{coro:maximum}\\
\ge &  0,
\end{align}
where~\eqref{coro:maximum} holds from the property of the maximum and the non-negativity of $e^{-(\beta - \alpha)t} - e^{-(\beta - \alpha)\tau}, \forall t \in T$ and $\max\limits_{t\in T} (V(z, \dot e) - \iota) - e^{\beta t} (V(z(t), \dot e(t)) - \iota)$

\begin{align}
\text{i.e., } & \eqref{coro:recurrence_sign_preserve} \nonumber\\
\ge \max_{t \in T} & -e^{-(\beta-\alpha)t}(e^{\beta t} -1)\iota + (e^{\alpha t} -1) \dots \nonumber\\
\cdot & (\gamma - \frac{a_2 \alpha_e C_h}{a_1 \alpha} \mu)\\
= \max_{t \in T} & -e^{-(\beta - \alpha) t} (e^{\beta t} - 1) \frac{a_2 e^{\beta \tau }\mu}{M} \dots \nonumber\\
+( & e^{\alpha t} -1)(\gamma - \frac{a_1 (\beta - \alpha)}{M \alpha} \mu) \label{Delta}.
\end{align}
Note that \begin{align}
\gamma & = \frac{(2\beta - \alpha)}{\alpha}\iota\\
\ge & \iota(\frac{\beta}{\alpha} + \frac{(\beta - \alpha)}{\alpha}e^{-\beta \tau}), \label{coro:minus_exponential}\\
= & \max_{t \in T} \iota(\frac{1-e^{-\beta t}}{1-e^{-\alpha t}} + \frac{(\beta - \alpha)}{\alpha}e^{-\beta \tau}), \label{coro:monocity}
\end{align}
where~\eqref{coro:minus_exponential} follows from the fact that $e^{-\beta \tau} < 1$, $\iota \ge 0$, and $\beta > \alpha$, ~\eqref{coro:monocity} follows from the monotonic decreasing property of $\frac{1-e^{-\beta t}}{1-e^{-\alpha t}} + \frac{(\beta - \alpha)}{\alpha}e^{-\beta \tau}$ and $\lim\limits_{t\rightarrow 0} \frac{1-e^{-\beta t}}{1-e^{-\alpha t}} + \frac{(\beta - \alpha)}{\alpha}e^{-\beta \tau} = \frac{\beta}{\alpha} + \frac{(\beta - \alpha)}{\alpha}e^{-\beta \tau}$. 

Thus, for $t \in T$, we have 
\begin{align}
& \eqref{Delta} \nonumber \\
= & \max_{t \in T} -e^{-(\beta - \alpha) t} (e^{\beta t} - 1) \iota \dots \nonumber\\
+( & e^{\alpha t} -1)[\frac{(2\beta - \alpha)}{\alpha} \iota - \frac{a_1 (\beta - \alpha) e^{-\beta \tau}}{a_2 \alpha} \iota]\\
= & \max_{t \in T} (e^{\alpha t} - 1)[\frac{(2\beta - \alpha)}{\alpha} \iota - (\frac{a_1 (\beta - \alpha) e^{-\beta \tau}}{a_2 \alpha}\dots \nonumber\\
 & + \frac{1- e^{-\beta t}}{1-e^{-\alpha t }})\iota] \\
\ge & \max_{t\in T} (e^{\alpha t} - 1)[\frac{(2\beta - \alpha)}{\alpha} \iota - (\frac{a_1 (\beta - \alpha) e^{-\beta \tau}}{a_2 \alpha} \dots \nonumber\\
& + \frac{1- e^{-\beta t}}{1-e^{-\alpha t }})\iota] \\
\ge & \max_{t\in T}(e^{\alpha t} - 1)[\frac{(2\beta - \alpha)}{\alpha} \iota - (\frac{ (\beta - \alpha) e^{-\beta \tau}}{\alpha} \dots \nonumber\\
& + \frac{1- e^{-\beta t}}{1-e^{-\alpha t }})\iota]\label{coro:linear}\\
\ge & (e^{\alpha t} - 1) \max_{t\in T}[\frac{(2\beta - \alpha)}{\alpha} \iota - (\frac{ (\beta - \alpha) e^{-\beta \tau}}{\alpha} \dots \nonumber\\
& + \frac{1- e^{-\beta t}}{1-e^{-\alpha t }})\iota]\label{coro:in}\\
= & (e^{\alpha t} - 1) [\frac{(2\beta - \alpha)}{\alpha} \iota - \min_{t\in T}(\frac{(\beta - \alpha) e^{-\beta \tau}}{\alpha} \dots \nonumber\\
& + \frac{1- e^{-\beta t}}{1-e^{-\alpha t }})\iota]\\
\ge & (e^{\alpha t} - 1) [\frac{(2\beta - \alpha)}{\alpha} \iota - \max_{t\in T}(\frac{(\beta - \alpha) e^{-\beta \tau}}{\alpha} \dots \nonumber\\
& + \frac{1- e^{-\beta t}}{1-e^{-\alpha t }})\iota]\\
\ge 0 \label{coro:gamma},
\end{align}
where~\eqref{coro:linear} follows from $a_1\le a_2$ and $\beta > \alpha$,~\eqref{coro:in} and~\eqref{coro:gamma} follow from~\eqref{coro:monocity} and $e^{\alpha t} - 1 \ge 0$ for $t \ge 0$.

States in $S_V$ impose the pointwise budget inequality $\frac{V(z,\dot e)}{\alpha_e} \le h(z).$ We know that $\nabla h(z)^{\top}\dot z_s \ge -\alpha h(z)$ and $\|\nabla h(z)\|\le C_h$. Along the FoM trajectory $x(t)$, we have 
\begin{align}
& e^{\alpha t} h(z(t)) - h(z) \\
= & \int_{0}^{t} \frac{d}{ds}(e^{\alpha s} h(z(s))) ds \\
= & \int_{0}^{t} e^{\alpha s} (\dot h(z(s)) + \alpha h(z(s))) ds\\
= & \int_{0}^{t} e^{\alpha s} (\nabla h(z(s))^{\top}(\dot z_s + \dot e(s)) + \alpha h(z(s))) ds\\
\ge & \int_{0}^{t} e^{\alpha s} (- \alpha h(z(s)) + \nabla h(z(s))^{\top}\dot e(s) + \alpha h(z(s)))ds \label{coro:safe_rom}\\
= & \int_{0}^{t} e^{\alpha s} \nabla h(z(s))^{\top} \dot e(s) ds \\
\ge & - C_h \int_{0}^{t} e^{\alpha s} \|\dot e(s)\| ds, \label{coro:safe_cauchy}
\end{align}
where~\eqref{coro:safe_rom} and~\eqref{coro:safe_cauchy} follow from $\nabla h(z)^{\top}\dot z_s \ge -\alpha h(z)$ and Cauchy Inequality, respectively.

Then $\forall t$, we have 
\begin{align}
h(z(t)) \ge e^{-\alpha t} h(z)- C_h \int_{0}^{t} e^{-\alpha (t-s)} \|\dot e(s)\|ds.
\end{align}
Accoring to the former proof of the Corollary~\ref{cor:ISS_safety}, we know that $\max_{t \in T} e^{\alpha t}(h_V(z(t), \dot e(t)) + \gamma) - (h_V(z, \dot e) + \gamma) \ge 0$, i.e. $S_{Vd}$ is a control recurrent set, which means that the trajectory will come back to $S_{Vd}$ every $\tau$ units of time. Thus, to prove $S_d = \{ z \in \mathcal{Z} : h_d(z) = h(z) + \gamma/\alpha_e \ge 0 \}$ is invariant, i.e., the ISSf is guaranteed. We only need to prove that every trajectory starts from $S_{Vd}$ will be contained in the set $S_{d}$ for at least $\tau$ units of time.

Note that $\forall t \in T$, we have
\begin{align}
  h(z(t)) & \nonumber \ge  \min_{t\in T} e^{-\alpha t} h(z)- C_h \int_{0}^{t} e^{-\alpha (t-s)} \|\dot e(s)\|ds \\
 \ge \min_{t\in T} & e^{-\alpha t} h(z) - C_h e^{-\alpha t} \int_{0}^{t} (M e^{-(\beta - \alpha) s}  \|\dot e\| + \mu ) ds \label{coro:safe_exponential}\\
 = \min_{t\in T} & e^{-\alpha t} h(z) -\frac{C_h M}{\beta-\alpha}e^{-\alpha t}(1-e^{-(\beta-\alpha)t})\|\dot e\| \dots \nonumber\\
+ & \frac{C_h}{\alpha} (1-e^{-\alpha t})\mu\\
 \ge \min_{t\in T} & e^{-\alpha t} \frac{V(z,\dot e) - \gamma}{\alpha_e} \dots \nonumber \\
 - & \frac{C_h M}{\beta-\alpha}e^{-\alpha t}(1-e^{-(\beta-\alpha)t})\|\dot e\| \dots \nonumber\\
 + & \frac{C_h}{\alpha} (1-e^{-\alpha t})\mu \\
 \ge \min_{t\in T} & e^{-\alpha t} V(z,\dot e) [\frac{1}{\alpha_e} - \frac{C_h M}{a_1 (\beta - \alpha)} (1 - e^{-(\beta- \alpha)t})] \dots \nonumber\\
 + & \frac{C_h}{\alpha} \mu - e^{-\alpha t}(\frac{\gamma}{\alpha_e} + \frac{C_h}{\alpha} \mu)\\
 \ge \min_{t\in T} & e^{-\alpha t} V(z,\dot e) [\frac{1}{\alpha_e} - \frac{C_h M}{a_1 (\beta - \alpha)}] - \frac{\gamma}{\alpha_e} \label{coro:monotonicity}\\
 = \min_{t\in T} & e^{-\alpha t} V(z,\dot e) \frac{C_h M}{a_1 (\beta - \alpha)} (\frac{a_2}{a_1} - 1) - \frac{\gamma}{\alpha_e}\\
\ge - & \frac{\gamma}{\alpha_e}, \label{coro:safe_linear_contained}
\end{align}
where~\eqref{coro:safe_exponential} holds from the fact that $\|\dot{e}(t)\| \le M \|\dot{e}\| e^{-\beta t} + \mu$, i.e. input-to-state stability of the tracking error under recurrence condition, ~\eqref{coro:monotonicity} follows from the monotonicity decreasing property w.r.t. $t$, and~\eqref{coro:safe_linear_contained} follows from the fact that $a_1 \le a_2$ and $\beta > \alpha$. Thus, we complete the proof.
\end{proof}
\end{document}